\definecolor{webgreen}{rgb}{0,.5,0}
\definecolor{webbrown}{rgb}{.6,0,0}
\newtheorem{theorem}{Theorem}
\newtheorem{lemma}{Lemma}
\newtheorem{corollary}{Corollary}
\theoremstyle{definition}\newtheorem{remark}{Remark}
\theoremstyle{definition}\newtheorem{problem}{Problem}
\begin{document}
\title{Finite test sets for morphisms which are square-free on some of Thue's square-free ternary words}
\author{James D. Currie\\
Department of Mathematics and Statistics\\ 
University of Winnipeg\\
Winnipeg, Manitoba R3B 2E9, Canada}
\maketitle

\begin{abstract}
Let $S$ be one of $\{aba,bcb\}$ and $\{aba, aca\}$, and let $w$ be an infinite square-free word over $\Sigma=\{a,b,c\}$ with no factor in $S$. 
Suppose that $f:\Sigma\rightarrow T^*$ is a non-erasing morphism. Word $f(w)$ is square-free if and only if  $f$ is square-free on factors of $w$ of length 7 or less. 
\end{abstract}

The papers of Axel Thue on square-free words \cite{thue,thue2} are foundational to the area of combinatorics on words. A word $w$ is {\em square-free} if we cannot write $w=xyyz$, where $y$ is a non-empty word. The longest square-free words over the 2-letter alphabet $\{a,b\}$ are $aba$ and $bab$, each of length 3, but Thue showed that arbitrarily long square-free words exist over a 3-letter alphabet. Infinite square-free words over finite alphabets are routinely encountered in combinatorics on words, and are frequently used as building blocks in  constructions. (See \cite{loth97,loth02} for further background and definitions.)

Let $w$ be an infinite square-free word over $\Sigma=\{a,b,c\}$. Thue showed that $w$ must contain every length 2 square-free word over $\Sigma$ as a factor. However, Thue showed that the same is not true for length 3 square-free words over $\Sigma$. For each of $S_1=\{aba,bcb\}, S_2=\{aba, aca\}$, and $S_3=\{aba,bab\}$, he constructed an infinite square-free word over $\Sigma$ with no factor in $S_i$. The infinite square-free word over $\Sigma$ with no factor in $S_1$ has been called {\em vtm} (for `variation of Thue-Morse'), and has been used in constructions in several papers \cite{blanchetsadri,currie91,currieetal}. Various constructions involve the necessity of showing $f$({\em vtm}) to be square-free for particular morphisms $f$. In this paper, we give a simple testable characterization of morphisms $f$ such that $f(${\em vtm}) is square-free; we do the same in the case where {\em vtm} is replaced by an infinite square-free word over $\Sigma$ with no factors in $S_2$.

From now on, fix $S$ to be one of $S_1$ and $S_2$, and let $w$ be an infinite square-free word over $\Sigma$ with no factor in $S$. We establish the following:

\begin{theorem}\label{main}
Suppose that $f:\Sigma\rightarrow T^*$ is a non-erasing morphism. Word $f(w)$ is square-free if and only if  $f$ is square-free on factors of $w$ of length 7 or less. 
\end{theorem}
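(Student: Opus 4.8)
The plan is to prove the nontrivial direction by contradiction, reducing an arbitrary square in $f(w)$ to one that already appears in the image of a short factor of $w$. The forward implication needs no work: if $f(w)$ is square-free then so is each of its factors, and $f(v)$ is a factor of $f(w)$ for every factor $v$ of $w$, so in particular $f$ is square-free on the finitely many factors of $w$ of length at most $7$. For the converse, assume $f$ is square-free on all factors of $w$ of length at most $7$ and suppose, for a contradiction, that $f(w)$ contains a square. Choose a square $XX$ occurring in $f(w)$ with $|X|$ as small as possible, and let $v=\alpha_1\alpha_2\cdots\alpha_n$ be a factor of $w$ of least length whose image $f(v)$ contains $XX$. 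Since $f$ is non-erasing, $|f(\alpha_i)|\ge 1$ for each $i$, and by minimality of $n$ the occurrence of $XX$ is not contained in $f(v')$ for any proper factor $v'$ of $v$; in particular it meets both boundary blocks $f(\alpha_1)$ and $f(\alpha_n)$, and $n\ge 2$ (if $n=1$ then $XX$ is a factor of the square-free word $f(\alpha_1)$, a contradiction). It therefore suffices to prove $n\le 7$, as that contradicts the hypothesis that $f$ is square-free on the factor $v$ of $w$.

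I would first extract the local structure of $f$ that the hypothesis forces. Since $w$ contains every square-free word of length $2$ over $\Sigma$ (Thue, as recalled above), $f(\alpha)=f(\beta)$ for distinct letters $\alpha,\beta$ would make $f(\alpha\beta)=f(\alpha)^2$ a square in $f(\alpha\beta)$, where $\alpha\beta$ is a length-$2$ factor of $w$; hence $f$ is injective on letters. Building on this, I would show that $f$ is \emph{synchronizing} on $w$: the blocks $f(a),f(b),f(c)$ admit no nontrivial overlap, in the sense that no $f(\gamma)$ occurs inside $f(\alpha\beta)$ straddling the boundary between $f(\alpha)$ and $f(\beta)$ when $\alpha\beta$ is a factor of $w$. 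This is a finite check: each such overlap is shown to plant a square inside $f(v')$ for a suitable factor $v'$ of $w$ of length at most $7$. It is precisely here that the forbidden set $S$ enters the argument — only the short factors actually occurring in $w$ may serve as witnesses, so the admissible list differs between $S=S_1$ and $S=S_2$, and it is the absence of factors such as $aba$ that raises the bound needed for this check to $7$.

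With $f$ synchronizing, the standard dichotomy applies to $XX$. On one hand, if $|X|$ exceeds a fixed multiple of $\max_\alpha|f(\alpha)|$, then the synchronizing property forces the square $XX$ to descend to a square in $w$ (a long occurrence of $X$ pins down the block decomposition, so that $X$ agrees with $f(u)$ and $uu$ occurs in $w$), which is impossible; hence $|X|<C\max_\alpha|f(\alpha)|$ for an absolute constant $C$. On the other hand, $XX$ contains $f(\alpha_2\cdots\alpha_{n-1})$, so $\sum_{i=2}^{n-1}|f(\alpha_i)|<2C\max_\alpha|f(\alpha)|$; a letter whose image is as long as $\max_\alpha|f(\alpha)|$ can then occur only boundedly often among $\alpha_2,\dots,\alpha_{n-1}$, and, because $v$ is square-free, the short-image letters lying between consecutive occurrences of such a letter — they form square-free words over a two-letter alphabet, hence have length at most $3$ — are also bounded in number. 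Together these bound $n$ by an absolute constant.

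The crude bound on $n$ produced this way is far larger than $7$, and the real work — the step I expect to be the main obstacle — is to sharpen it to $n\le 7$. This is exactly where the detailed structure of the forbidden sets is used: one must track where the midpoint of $XX$ falls among $f(\alpha_1),\dots,f(\alpha_n)$, account for the contributions of the two boundary blocks, and, in the small amount of casework that separates $S_1$ from $S_2$, use the absence of $aba$ (and of $bcb$, respectively $aca$) to eliminate the configurations that would otherwise leave $n>7$. Once $n\le 7$ is in hand, $f(v)$ contains the square $XX$ while $|v|\le 7$, contradicting the hypothesis and completing the proof.
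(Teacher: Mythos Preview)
Your outline has two genuine gaps, and the second one is fatal as written.

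First, the synchronization step. You assert that ``no $f(\gamma)$ occurs inside $f(\alpha\beta)$ straddling the boundary'' and that this is ``a finite check,'' but you do not carry it out, and it is not clear it can be carried out from the hypothesis you have. The hypothesis is only that $f$ is square-free on \emph{factors of $w$} of length at most $7$; words such as $aba$ (and $bcb$ or $aca$, depending on $S$) are not available as witnesses. The standard route to synchronization for square-free morphisms (Crochemore) uses square-freeness on \emph{all} short square-free words, and it is precisely the missing words that make the present situation delicate. The paper does not prove a general synchronization property of $f$; instead it proves a lineup statement (its Lemma~6) only for the particular minimal square $XX$ under analysis, using the length bound $k-i\ge 7$ to drive an induction matching blocks one by one. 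That is a substantially different argument from the global synchronization you propose, and you have not supplied a replacement.

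Second, and more seriously, you explicitly concede that your dichotomy yields only a crude bound on $n$ ``far larger than $7$,'' and that the sharpening to $n\le 7$ is ``the main obstacle'' still to be overcome. But that sharpening \emph{is} the theorem. Saying that one must ``track where the midpoint of $XX$ falls'' and ``eliminate the configurations'' is a restatement of the goal, not a proof. Moreover, your descent step is stated too strongly: even granting synchronization, a long square in $f(w)$ need not descend to a square $uu$ in $w$; what it forces (and what the paper actually proves in its Theorem~2) is a near-square factor $\alpha z\beta z\gamma$ of $w$ with $|z|\ge 3$, $\alpha,\gamma\ne\beta$, and $\alpha\beta\gamma$ \emph{not} a factor of $w$. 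The paper then finishes with a short $S$-specific combinatorial lemma showing that $w$ admits no such factor (e.g.\ no $azbza$ or $czbzc$ when $S=\{aba,cbc\}$, and no $azbza$ or $azcza$ with $|z|\ge 3$ when $S=\{aba,aca\}$). That endgame is where $S$ is actually used, and it has no counterpart in your plan; your proposed use of $S$ is only to patch the synchronization check, which is the wrong place. As written, the proposal does not contain the key idea that closes the argument.
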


Our theorem says that to establish square-freeness of $f(w)$, one need only check $f$ for square-freeness on a finite test set. A variety of similar theorems were proved by Crochemore \cite{crochemore}; in particular, a morphism $g$ defined on $\Sigma$ preserves square-freeness exactly when it preserves square-freeness on words of $\Sigma^*$ of length at most 5. Note that the theorem of the present paper tests a weaker property; while $aba$ is square-free, we do not require 
$f(aba)$ to be square-free, for example. Finite test sets for morphisms preserving overlap-freeness have also been well-studied \cite{richomme}.

The theorem suggests a couple of open problems:

\begin{problem} Is the constant 7 in Theorem~\ref{main} optimal?
\end{problem}
\begin{problem}\label{S_3}
Suppose $W$ 
is an infinite square-free word over $\Sigma$ with no factor in $S_3$. Is there a constant $k$ such that for any non-erasing morphism $f$ on $\Sigma$, $f(W)$ is square-free if and only if it is square-free on factors of $W$ of length at most $k$?
\end{problem}

Write
$$w=a_0a_1a_2a_3\cdots, a_i\in\Sigma.$$

Suppose that $f:\Sigma\rightarrow T^*$ is a non-erasing morphism which is square-free on factors of $w$ of length 7 or less. 
\begin{lemma}\label{at most 3}Suppose $f(\chi)$ is a factor of $f(x)$, where $x\in\Sigma$ and $\chi$ is a factor of $w$. Then $|\chi|\le 3$.
\end{lemma}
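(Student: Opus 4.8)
The plan is to argue by contradiction and extract an impossibility from pure length bookkeeping together with the fact that $\Sigma$ has only three letters; no serious case analysis should be needed. Suppose $|\chi|\ge 4$. First I would reduce to the case $|\chi|=4$: any factor $\chi'$ of $\chi$ of length $4$ is again a factor of $w$, and since $f(\chi)$ is a factor of $f(x)$ and $f(\chi')$ is a factor of $f(\chi)$, $f(\chi')$ is a factor of $f(x)$ as well; so it suffices to rule out $|\chi|=4$. Write $\chi=c_1c_2c_3c_4$ with $c_i\in\Sigma$.

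The crux is the following elementary length inequality. Because $f$ is non-erasing, $|f(c_i)|\ge 1$ for each $i$, so
\[
|f(\chi)|=|f(c_1)|+|f(c_2)|+|f(c_3)|+|f(c_4)|\ge |f(c_j)|+3
\]
for every $j\in\{1,2,3,4\}$. Since $f(\chi)$ is a factor of $f(x)$ we get $|f(x)|\ge |f(\chi)|>|f(c_j)|$, hence $f(x)\ne f(c_j)$, hence $x\ne c_j$. Thus $x$ is a letter of $\Sigma$ distinct from each of $c_1,c_2,c_3,c_4$, which (as $|\Sigma|=3$) forces $\{c_1,c_2,c_3,c_4\}$ to consist of at most two letters.

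To conclude, I would invoke the classical fact recalled in the introduction, that every square-free word over a two-letter alphabet has length at most $3$. As a factor of the square-free word $w$, $\chi$ is square-free; but we have just shown $\chi$ is a square-free word of length $4$ over a binary alphabet, which is impossible. This contradiction gives $|\chi|\le 3$.

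I do not anticipate a real obstacle: the only spot needing a moment's attention is the opening reduction to length $4$ (verifying that both ``being a factor of $w$'' and ``$f(\cdot)$ being a factor of $f(x)$'' are inherited by a length-$4$ factor of $\chi$), after which the proof is a short counting-and-pigeonhole exercise. It is perhaps worth noting that this argument uses neither the hypothesis that $f$ is square-free on short factors of $w$ nor the avoidance of $S$ — only that $f$ is non-erasing and that $\Sigma$ is ternary.
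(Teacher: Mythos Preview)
Your proof is correct and follows essentially the same idea as the paper's: both arguments hinge on showing (via a simple length comparison using non-erasingness) that $x$ cannot occur as a letter of a long $\chi$, whence $\chi$ is a square-free word over the two remaining letters and so has length at most $3$. The paper organizes this as a direct two-case split (either $x$ occurs in $\chi$, forcing $|\chi|=1$, or it does not, forcing $|\chi|\le 3$) rather than a contradiction with a preliminary reduction to $|\chi|=4$, but the content is the same.
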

\begin{proof} If $x$ is a letter of $\chi$,  and $f(\chi)$ is a factor of $f(x)$, $$|f(x)|\le|f(\chi)|\le|f(x)|.$$
Since $f$ is non-erasing, this  forces $x=\chi$, giving $|\chi|=1$. 

If $x$ is not a letter of $\chi$,  then $\chi$ is a square-free word over a two-letter alphabet, so that $|\chi|\le 3.$ 
\end{proof}

\begin{lemma}\label{no prefix}Suppose that
$f(x)$ is a prefix or suffix of $f(y)$, where $x,y\in\Sigma$. Then $x=y$.
\end{lemma}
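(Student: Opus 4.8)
The plan is to reduce immediately to a length-$2$ factor of $w$, where square-freeness of the image is guaranteed by hypothesis. The key observation is that, since $x$ and $y$ are distinct letters, $xy$ and $yx$ are square-free words of length $2$; by Thue's result quoted above, every length-$2$ square-free word over $\Sigma$ occurs in $w$, so $xy$ and $yx$ are both factors of $w$, and hence $f(xy)$ and $f(yx)$ are square-free.

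Next I would argue by contradiction. Suppose $x\neq y$ and $f(x)$ is a prefix of $f(y)$. Then $f(y)=f(x)u$ for some word $u$, so
$$f(xy)=f(x)f(y)=f(x)f(x)u.$$
Since $f$ is non-erasing, $f(x)$ is non-empty, so $f(x)f(x)$ is a non-empty square occurring as a prefix of $f(xy)$, contradicting the square-freeness of $f(xy)$.

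The case where $f(x)$ is a suffix of $f(y)$ is handled symmetrically: writing $f(y)=uf(x)$ gives $f(yx)=uf(x)f(x)$, which has the non-empty square $f(x)f(x)$ as a suffix, again a contradiction. Therefore $x=y$.

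I do not expect a genuine obstacle here; the only point to verify carefully is that the two-letter words $xy$ and $yx$ really are factors of $w$ — which is precisely Thue's statement that every length-$2$ square-free ternary word is a factor of any infinite square-free ternary word — so that the hypothesis on factors of length at most $7$ (here, length $2$) applies.
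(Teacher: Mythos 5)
Your proof is correct and follows essentially the same route as the paper: assume $x\neq y$, note that $xy$ (resp.\ $yx$) is a length-$2$ square-free word and hence a factor of $w$ by Thue's result, and observe that $f(xy)$ (resp.\ $f(yx)$) then contains the square $f(x)f(x)$, contradicting the hypothesis on factors of length at most $7$. Your only addition is making the appeal to Thue's length-$2$ result explicit, which the paper leaves implicit.
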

\begin{proof}
We give the proof where $f(x)$ is a prefix of $f(y)$. (The other case is similar.) Suppose
$x\ne y$. Then
$xy$ must be a factor of $w$, and $xy$ is square-free. However, $f(xy)$ begins with the square $f(x)f(x)$, contradicting the square-freeness of $f$ on factors of $w$ of length at most 7.\end{proof}

\begin{lemma}\label{suffix} There is no solution to the equation 
$u=sf(\chi xyz)p$ such that $t,\alpha, x,y,z,\beta\in\Sigma$, $u$ is a suffix of $f(t)$, $s$ is a suffix of $f(\alpha)$, $p$ is a non-empty prefix of $f(\beta)$, $\alpha\chi xyz\beta$ is a factor of $w$.
\end{lemma}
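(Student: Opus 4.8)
The plan is to suppose such a solution exists and to use the length restriction to force the whole configuration into a rigid shape, then exhibit an explicit square. First I would observe that, since $u=sf(\chi xyz)p$ is a suffix of $f(t)$, the word $f(\chi xyz)$ is a factor of $f(t)$; moreover $\chi xyz$ is a factor of $w$, being a factor of $\alpha\chi xyz\beta$. Hence Lemma~\ref{at most 3} applies and gives $|\chi xyz|\le 3$. Since $x,y,z\in\Sigma$ are single letters, this forces $\chi$ to be empty, so $f(xyz)$ is a factor of $f(t)$ with $|xyz|=3$. Rerunning the proof of Lemma~\ref{at most 3} in this situation, the letter $t$ cannot occur in $xyz$ (otherwise $xyz=t$, impossible for a length-$3$ word), so $xyz$ is a square-free word over the two-letter alphabet $\Sigma\setminus\{t\}$; therefore $z=x\ne y$, and $\{t,x,y\}=\Sigma$. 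Thus the factor of $w$ in play is $\alpha xyx\beta$ with $x\ne y$.

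Next I would pin down $\alpha$ and $\beta$ using square-freeness of $\alpha xyx\beta$ (a factor of $w$). We cannot have $\alpha=x$ (else $\alpha x$ is a square), and we cannot have $\alpha=y$ (else $\alpha xyx=yxyx$ contains the square $(yx)^2$); since $\Sigma=\{t,x,y\}$ this leaves $\alpha=t$. Symmetrically $\beta\ne x$ and $\beta\ne y$ (else $xyx\beta=xyxy$ contains $(xy)^2$), so $\beta=t$. Hence $\alpha\chi xyz\beta=txyxt$, a factor of $w$ of length $5\le 7$; by the standing hypothesis $f$ is square-free on it, so $f(txyxt)=f(t)f(x)f(y)f(x)f(t)$ is square-free.

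Finally I would produce a square inside $f(txyxt)$. Writing $X=f(x)$ and $Y=f(y)$, the assumption that $u=sXYXp$ is a suffix of $f(t)$ means $f(t)=r\,s\,XYX\,p$ for some (possibly empty) word $r$; in particular $XYXp$ is a suffix of $f(t)$. On the other hand $p$ is a nonempty prefix of $f(\beta)=f(t)$, so $p$ is a prefix of $f(t)$. Reading, inside $f(txyxt)=f(t)\cdot XYX\cdot f(t)$, the suffix $XYXp$ of the first $f(t)$, then the central block $XYX$, then the prefix $p$ of the second $f(t)$, one obtains the factor $XYX\,p\,XYX\,p=(XYXp)^2$, a nonempty square (nonempty because $p\ne\varepsilon$). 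This contradicts square-freeness of $f$ on $txyxt$, and the lemma follows.

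I expect the one delicate point to be that very last step. After deducing $f(t)=r\,s\,XYX\,p$ with $p$ also a prefix of $f(t)$, it is tempting to compare $|p|$ with $|rs|$ and to chase $p$ deeper and deeper into $f(t)$, which leads to an unpleasant and unnecessary case analysis; the clean move is to notice that the suffix $XYXp$ of $f(t)$, the standalone copy of $f(xyx)$, and the prefix $p$ of $f(t)$ together already spell the square $(f(xyx)\,p)^2$. Everything else is routine bookkeeping with Lemma~\ref{at most 3} and with the finitely many short square-free ternary words.
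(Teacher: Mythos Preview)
Your proof is correct but follows a genuinely different route from the paper's. You invoke Lemma~\ref{at most 3} at the outset to force $\chi=\epsilon$ and $xyz=xyx$ with $\{t,x,y\}=\Sigma$, then use square-freeness of $w$ to identify $\alpha=\beta=t$, and finally exhibit the explicit square $(f(xyx)\,p)^2$ inside $f(txyxt)$. The paper never calls Lemma~\ref{at most 3}; instead it observes that $p$ is simultaneously a suffix of $f(t)$ and a non-empty prefix of $f(\beta)$, so $f(t\beta)$ contains $pp$, and since every square-free word of length~$2$ over $\Sigma$ occurs in $w$ this forces $\beta=t$. It then iterates: $f(tz\beta)$, $f(tyz\beta)$, $f(txyz\beta)$ each contain a square (namely $(f(z)p)^2$, $(f(yz)p)^2$, $(f(xyz)p)^2$), which successively forces $z$, $y$, $x$, and the last letter $\delta$ of $\alpha\chi$, all to differ from $t$; but then $\delta xyz$ is a square-free word of length~$4$ over the two-letter alphabet $\Sigma\setminus\{t\}$, which is impossible. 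Your approach buys a complete determination of the configuration and a single clean square at the end; the paper's approach handles arbitrary $\chi$ uniformly without first reducing to $\chi=\epsilon$ and does not rely on Lemma~\ref{at most 3}.
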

\begin{proof}
Suppose $t,\alpha,\chi, x,y,z,\beta, u,s,p$ are such that  $t,\alpha,x,y,z,\beta\in\Sigma$, $u$ is a suffix of $f(t)$, $s$ is a suffix of $f(\alpha)$, $p$ is a non-empty prefix of $f(\beta)$, $\alpha\chi xyz\beta$ is a factor of $w$, and $u=sf(\chi xyz)p$.

Since $p$ is a non-empty prefix of $f(\beta)$, but also a suffix of $f(t)$, we see that $f(t\beta)$ contains square $pp$, so that $t\beta$ is not a factor of $w$. This forces $t=\beta$. On the other hand, since $z\beta$ is a factor of $w$, we conclude that $z\ne t$. Again, $f(z)p$ is a prefix of $f(z\beta)$, but also a suffix of $f(t)$, so that $f(tz\beta)$ contains a square. Since $yz\beta$ is a factor of $w$, it follows that $y\ne t$. Similarly, $f(tyz\beta)$ contains a square, but $xyz\beta$ is a factor of $w$, so that that $x\ne t$. Finally, we see that
$f(txyz\beta)$ contains a square.
Let $\delta$ be the last letter of $\alpha\chi$. We conclude that $\delta\ne t$. However, now $\delta xyz$ is a squarefree word of length 4 over the two-letter alphabet $\Sigma-\{t\}$. This is impossible.
\end{proof}

The symmetrical lemma is proved analogously:

\begin{lemma}\label{prefix} There is no solution to the equation 
$u=sf(xyz\chi)p$ such that $t,\alpha, x,y,z,\beta\in\Sigma$, $\chi\in\Sigma^*$, $u$ is a prefix of $f(t)$, $s$ is a non-empty  suffix of $f(\alpha)$, $p$ is a prefix of $f(\beta)$, $\alpha xyz\chi\beta$ is a factor of $w$.
\end{lemma}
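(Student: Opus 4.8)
The plan is to run the argument of Lemma~\ref{suffix} verbatim but with left and right interchanged, so that ``suffix of $f(t)$'' becomes ``prefix of $f(t)$'', the distinguished short block $\chi xyz$ becomes $xyz\chi$, and the roles of $s$ (now forced non-empty) and $p$ (now allowed empty) are swapped. Concretely, suppose $u=sf(xyz\chi)p$ with $t,\alpha,x,y,z,\beta\in\Sigma$, $\chi\in\Sigma^*$, $u$ a prefix of $f(t)$, $s$ a non-empty suffix of $f(\alpha)$, $p$ a prefix of $f(\beta)$, and $\alpha xyz\chi\beta$ a factor of $w$.

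First I would exploit the non-emptiness of $s$: it is a prefix of $u$, hence of $f(t)$, and also a suffix of $f(\alpha)$, so $f(\alpha t)$ contains the square $ss$; since $f$ is square-free on factors of $w$ of length at most $7$, the word $\alpha t$ is not a factor of $w$, and since $w$ contains every square-free ternary word of length $2$ (Thue), this forces $\alpha=t$. Because $\alpha x$ is a (square-free) factor of $w$, we also get $x\ne\alpha=t$. Next comes the bootstrapping step: for each of $v=x$, $v=xy$, $v=xyz$, the word $f(\alpha)f(v)$ ends with $sf(v)$ while $f(t)$ begins with $sf(v)$ (both because $u=sf(xyz\chi)p$ is a prefix of $f(t)$), so $f(\alpha v t)$ contains the square $(sf(v))^2$; as $|\alpha v t|\le 5\le 7$, none of $\alpha xt,\alpha xyt,\alpha xyzt$ is a factor of $w$. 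Comparing with the factors $\alpha xy$, $\alpha xyz$ of $w$ gives in turn $y\ne t$ and $z\ne t$, and the last case also records that $\alpha xyzt$ is not a factor of $w$.

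Finally, let $\delta$ be the first letter of $\chi\beta$ (well defined since $\beta\in\Sigma$). Then $\alpha xyz\delta$ is a prefix of the factor $\alpha xyz\chi\beta$, hence a factor of $w$, so $\delta\ne t$. But then $xyz\delta$ is a square-free factor of $w$ of length $4$ over the two-letter alphabet $\Sigma\setminus\{t\}$, which is impossible since square-free binary words have length at most $3$. The one point requiring care is the prefix/suffix bookkeeping under the mirror: at each stage one must check that $sf(v)$ is simultaneously a suffix of $f(\alpha v)$ and a prefix of $f(t)$, and that every word to which $f$ is applied has length at most $7$ so that the assumed square-freeness of $f$ applies; both are immediate here, so I do not expect any genuine obstacle beyond this routine verification.
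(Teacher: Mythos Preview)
Your proof is correct and is precisely the mirror of the paper's proof of Lemma~\ref{suffix}; the paper itself gives no separate proof of Lemma~\ref{prefix} but simply declares that ``the symmetrical lemma is proved analogously,'' and what you have written is exactly that analogous argument with all the left/right bookkeeping carried out. Your explicit invocation of Thue's result that $w$ contains every length~$2$ square-free word (to conclude $\alpha=t$ from ``$\alpha t$ not a factor of $w$'') makes manifest a step the paper leaves implicit in the proof of Lemma~\ref{suffix}.
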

\begin{theorem}\label{abc}Suppose that $f(w)$ contains a non-empty square. Then $w$ contains a factor  $\alpha z \beta z \gamma$, $\alpha,\beta,\gamma\in\Sigma$, $\alpha,\gamma\ne \beta$, $|z|\ge 3$, such that  $\alpha\beta\gamma$ is not a factor of $w$.
\end{theorem}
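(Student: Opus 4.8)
The plan is to argue with a \emph{shortest} square. Suppose $f(w)$ contains a non-empty square and let $uu$ be one with $|u|$ as small as possible. For each letter $x$ the word $x$ is a factor of $w$ of length $\le 7$, so $f(x)$ is square-free; hence $uu$ is not a factor of a single block $f(a_m)$ and the square genuinely overlaps several blocks of the factorisation $f(w)=f(a_0)f(a_1)\cdots$. I would first record the elementary consequences of Lemmas \ref{at most 3} and \ref{no prefix}: $f$ is injective on $\Sigma$ and, restricted to the letters occurring in $w$, is both a prefix code and a suffix code, and no single block $f(x)$ can contain $f$ of four or more consecutive letters of $w$.

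Next I would fix the block decomposition of the square. There are indices $i\le j\le k$ with the first $u$ beginning inside $f(a_i)$ and ending inside $f(a_j)$, and the second $u$ beginning inside $f(a_j)$ and ending inside $f(a_k)$; writing $v_1=a_{i+1}\cdots a_{j-1}$ and $v_2=a_{j+1}\cdots a_{k-1}$ this gives
$$u=s_0\,f(v_1)\,p_1=s_1\,f(v_2)\,p_2,\qquad f(a_j)=p_1s_1,$$
where $s_0$ is a suffix of $f(a_i)$, $p_2$ a prefix of $f(a_k)$, and $p_1,s_1$ are the parts of $f(a_j)$ lying in the first and second copy of $u$ respectively; the easier degenerate cases, where the square or its midpoint meets a block boundary, correspond to some of $s_0,p_1,s_1,p_2$ being empty. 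Since $uu=s_0f(v_1a_jv_2)p_2$ is a factor of $f(a_iv_1a_jv_2a_k)$ and $a_iv_1a_jv_2a_k$ is a factor of $w$ of length $|v_1|+|v_2|+3$, square-freeness of $f$ on factors of $w$ of length $\le 7$ forces $|v_1|+|v_2|\ge5$; in particular $\max(|v_1|,|v_2|)\ge3$, and by the symmetry between Lemmas \ref{suffix} and \ref{prefix} I may assume $|v_1|\ge3$.

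The heart of the argument is a synchronisation step for the two factorisations
$$u=s_0\,f(a_{i+1})\cdots f(a_{j-1})\,p_1=s_1\,f(a_{j+1})\cdots f(a_{k-1})\,p_2$$
of the single word $u$, whose inner factors are complete blocks. Because $f$ is a prefix and suffix code, no block can carry $f$ of more than three letters, and --- crucially --- Lemmas \ref{suffix} and \ref{prefix} forbid a partial block from spanning a suffix-of-a-block followed by three or more complete blocks followed by a non-empty prefix-of-a-block all coming from a single factor of $w$, the two factorisations of $u$ cannot drift apart. Carrying this through, I expect to conclude that $v_1$ and $v_2$ have a common factor $z$ with $|z|\ge3$ occupying the same position relative to the shared block $f(a_j)$, so that $w$ contains the factor $\alpha z\,a_j\,z\gamma$, where $\alpha$ is the letter immediately before this occurrence of $z$ in $v_1$ (or $a_i$) and $\gamma$ the letter immediately after it in $v_2$ (or $a_k$). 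Setting $\beta=a_j$, square-freeness of $w$ rules out $\alpha=\beta$ (else $w$ contains $(\alpha z)^2$) and $\gamma=\beta$ (else $(z\beta)^2$).

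It remains to see that $\alpha\beta\gamma$ is not a factor of $w$, and this is where minimality re-enters. If $\alpha\beta\gamma$ were a factor of $w$, then $f(\alpha)f(\beta)f(\gamma)$ would be a factor of $f(w)$; on the other hand the equalities $u=s\,f(z)\,p_1=s'\,f(z)\,p_2$ (with $s,s'$ the partial blocks ending just before the two occurrences of $f(z)$) together with $f(\beta)=p_1s_1$ force enough periodicity among the partial blocks that one reads off from $f(\alpha)f(\beta)f(\gamma)$ a non-empty square shorter than $uu$, contradicting its minimality. Hence $\alpha\beta\gamma$ is not a factor of $w$, completing the proof. I expect the synchronisation step to be the main obstacle: it should need a somewhat delicate case analysis to show the two block factorisations of $u$ stay in step so that the repeated factor $z$ genuinely occurs, while the extraction of the short square at the end is a secondary, more mechanical point. (Note that this theorem concerns only the morphism $f$; avoidance of $S$ by $w$ is not used here, and would enter only when deriving Theorem~\ref{main}.)
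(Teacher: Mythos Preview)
Your plan is the paper's plan: shortest square, block indices $i<j<k$, synchronise the two factorisations of $u$, then show $\alpha\beta\gamma$ is not a factor of $w$.  The paper isolates the synchronisation as a separate lemma (Lemma~\ref{lineup}) and proves something stronger than what you aim for: one gets $j-i=k-j$, $a_{i+\ell}=a_{j+\ell}$ for all $1\le\ell\le j-i-1$ (so $v_1=v_2$ in your notation), and the partial blocks match as well, $p_1=p_2$ and $s_0=s_1$.  Hence $z=v_1=v_2$ and one always has $\alpha=a_i$, $\beta=a_j$, $\gamma=a_k$.

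Your weaker target --- $z$ merely a common factor of $v_1$ and $v_2$, with $\alpha,\gamma$ possibly interior letters --- would not close the argument.  If $\alpha$ is interior to $a_iv_1$ then no partial-block data attaches to it, and there is no mechanism for producing a square inside $f(\alpha\beta\gamma)$.  In fact your own displayed equation $u=s\,f(z)\,p_1=s'\,f(z)\,p_2$ already forces more than you state: for the second equality to hold with tail $p_2$ one needs $z=v_2$, and your description of $s,s'$ as ``partial blocks'' only makes sense once $z=v_1$ too.  So the full synchronisation is really what you must establish.

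Once that is in hand, $f(a_ia_ja_k)$ contains the non-empty square $(s_0p_1)(s_1p_2)=(s_0p_1)^2$, of length strictly less than $2|u|$.  At this point your minimality argument and the paper's direct appeal to square-freeness of $f$ on factors of $w$ of length $\le 7$ are interchangeable ways to conclude that $a_ia_ja_k=\alpha\beta\gamma$ is not a factor of $w$.
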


The proof of Theorem~\ref{abc} will use several lemmas. Suppose that $f(w)$ contains a non-empty square $xx$, with $|x|$ as short as possible. Write
$f(w)=uxxv$, such that
$$u=A_0A_1\cdots A_i'$$
$$ux=A_0A_1\cdots A_j'$$
$$uxx=A_0A_1\cdots A_k'$$
where $i\le j\le k$ are non-negative integers, and for each non-negative integer $\ell$, $A_\ell=f(a_\ell)$, and $A_\ell'$ is a prefix of $A_\ell$, but $A_\ell'\ne A_\ell$. This notation is not intended to exclude the possibilities that $i=0$, $i=j$ and/or $j=k$.

\begin{remark}
Since $f$ is square-free on factors of $w$ of length at most 7, but $f(a_i\cdots a_k)$ contains the square $xx$, we must have $k-i\ge 7$.
\end{remark}
\begin{remark}
We cannot have $i=j$; otherwise suffix $x$ of $ux$ is a factor of $A_j=A_i$, and $A_{i+1}\cdots A_{k-1}$ is a factor of suffix $x$ of $uxx$. Then, $f(a_{i+1}a_{i+2}\cdots a_{k-1})$ is a factor of $f(a_i)$, forcing $((k-1)-(i+1)+1\le 3$ by Lemma~\ref{at most 3}, so that $k-i\le 4$, a contradiction. Reasoning, in the same way, we show that $i<j<k$.

For $\ell\in\{i,j,k\}$, let $A_\ell^{\prime\prime}$ be the suffix of $A_\ell$ such that $A_\ell=A_\ell'A_\ell^{\prime\prime}$. By our choice of $A_\ell'$, $A_\ell^{\prime\prime}\ne\epsilon.$
Then 
\begin{equation}\label{x=x}x=A_i^{\prime\prime}A_{i+1}\cdots A_{j-1}A_j'=A_j^{\prime\prime}A_{j+1}\cdots A_{k-1}A_k'.\end{equation}
\end{remark}

\begin{remark}
Since $k-i\ge 7$, we must have $k-j-1\ge 3$ and/or $j-i-1\ge 3.$
\end{remark}
\begin{lemma}We must have $|A_i^{\prime\prime}|+|A_k'|\le|x|$ and
$|A_j^{\prime\prime}|+|A_j'|\le|x|$.
\end{lemma}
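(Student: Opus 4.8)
The plan is to prove each inequality by contradiction, and in each case to use the remark that $k-j-1\ge 3$ or $j-i-1\ge 3$ to reduce to one of two symmetric situations, one contradicting Lemma~\ref{suffix} and the other Lemma~\ref{prefix}. The guiding idea is that failure of either inequality forces one of $f(a_i),f(a_j),f(a_k)$ to be long enough to contain, as a factor reaching one of its ends, the $f$-image of a block of at least three consecutive letters of $a_i\cdots a_j$ or of $a_j\cdots a_k$ together with a nonempty overhang into the image of an adjacent letter of that block --- exactly the configuration forbidden by Lemmas~\ref{suffix} and~\ref{prefix}. The cited remark is what provides a block long enough to trigger one of those lemmas.

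Consider $|A_j''|+|A_j'|\le |x|$ first; as $A_j=A_j'A_j''$ this is the assertion $|f(a_j)|\le|x|$, so assume $|f(a_j)|>|x|$. In the subcase $j-i-1\ge 3$, comparing $|A_j'|+|A_j''|>|x|$ with the left-hand description of $x$ in (\ref{x=x}) gives $|A_i''|+|f(a_{i+1}\cdots a_{j-1})|<|A_j''|$; since $A_j''$ is at once a suffix of $f(a_j)$ and a prefix of $x=A_i''A_{i+1}\cdots A_{j-1}A_j'$ longer than its prefix $A_i''A_{i+1}\cdots A_{j-1}$, we can write $A_j''=A_i''\,f(a_{i+1}\cdots a_{j-1})\,Z$ with $Z$ a nonempty prefix of $A_j'$, hence of $f(a_j)$. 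Applying Lemma~\ref{suffix} with $t=\beta=a_j$, $\alpha=a_i$, $s=A_i''$, the $(j-i-1)$-letter block $\chi xyz=a_{i+1}\cdots a_{j-1}$, and $p=Z$ now yields a contradiction, since $A_j''$ is a suffix of $f(t)$ equal to $s\,f(\chi xyz)\,p$ and $\alpha\chi xyz\beta=a_i\cdots a_j$ is a factor of $w$. The subcase $k-j-1\ge 3$ is symmetric: one uses the right-hand description of $x$ in (\ref{x=x}), the fact that $A_j'$ is a prefix of $f(a_j)$, and Lemma~\ref{prefix} in place of Lemma~\ref{suffix}.

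For $|A_i''|+|A_k'|\le|x|$, assume $|A_i''|+|A_k'|>|x|$. Reading (\ref{x=x}) from its two ends shows $A_i''$ is a prefix of $x$ and $A_k'$ a suffix of $x$, and as each has length at most $|x|$ we may write $x=PQR$ with $A_i''=PQ$, $A_k'=QR$, and $Q$ nonempty. Matching $x=PQR$ against the two descriptions in (\ref{x=x}) gives $R=f(a_{i+1}\cdots a_{j-1})A_j'$ and $P=A_j''\,f(a_{j+1}\cdots a_{k-1})$. Split on the remark once more. If $k-j-1\ge 3$, then $A_i''=A_j''\,f(a_{j+1}\cdots a_{k-1})\,Q$ is a suffix of $f(a_i)$, where $A_j''$ is a suffix of $f(a_j)$, $Q$ is a nonempty prefix of $f(a_k)$, and $a_j\cdots a_k$ is a factor of $w$; this contradicts Lemma~\ref{suffix} with $t=a_i$, $\alpha=a_j$, $\beta=a_k$. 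If $j-i-1\ge 3$, then $A_k'=Q\,f(a_{i+1}\cdots a_{j-1})\,A_j'$ is a prefix of $f(a_k)$, where $Q$ is a nonempty suffix of $f(a_i)$, $A_j'$ is a prefix of $f(a_j)$, and $a_i\cdots a_j$ is a factor of $w$; this contradicts Lemma~\ref{prefix} with $t=a_k$, $\alpha=a_i$, $\beta=a_j$.

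The routine parts are the length bookkeeping that extracts the factorization $x=PQR$ and the word $Z$, and the mechanical check that the words presented to Lemmas~\ref{suffix} and~\ref{prefix} satisfy all the hypotheses --- in particular that the required overhang ($p=Z$, $p=Q$, or $s=Q$) is nonempty and that a three-letter block is available, which is precisely where the remark is invoked. The one step that needs a little foresight, and which I regard as the crux, is the realization that the two inequalities are bounds on, respectively, $|f(a_j)|$ and on how far the images of $a_i$ and $a_k$ may reach into $x$, so that Lemmas~\ref{suffix} and~\ref{prefix} are exactly the tools to bring to bear.
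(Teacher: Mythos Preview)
Your proof is correct and follows essentially the same route as the paper's: assume the inequality fails, observe that one of the two halves $a_{i+1}\cdots a_{j-1}$ or $a_{j+1}\cdots a_{k-1}$ has length at least $3$, and in each case extract a factorization of a suffix of $f(a_j)$ (resp.\ prefix of $f(a_j)$, suffix of $f(a_i)$, prefix of $f(a_k)$) that feeds directly into Lemma~\ref{suffix} or Lemma~\ref{prefix}. The paper spells out only the inequality $|A_j''|+|A_j'|\le|x|$ and declares the other ``similar''; your $PQR$ decomposition for $|A_i''|+|A_k'|\le|x|$ is exactly the similar argument made explicit, and the parameter matching with the two lemmas is accurate in all four subcases.
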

\begin{proof} We give the proof that $|A_j^{\prime\prime}|+|A_j'|\le|x|$. (The proof
of the other assertion is similar.) Suppose for the sake of getting a contradiction that $|A_j^{\prime\prime}|+|A_j'|>|x|$. Then
$|A_j^{\prime\prime}|>|x|-|A_j'|=|A_i^{\prime\prime}A_{i+1}\cdots A_{j-1}|$. It follows that $A_j^{\prime\prime}=A_i^{\prime\prime}A_{i+1}\cdots A_{j-1}A_j^{\prime\prime\prime}$ for some non-empty prefix $A_j^{\prime\prime\prime}$ of $A_j$. Similarly, one shows that $A_j^{\prime}=A_j^{\prime\prime\prime\prime}A_{j+1}\cdots A_{k-1}A_k'$ for some non-empty suffix $A_j^{\prime\prime\prime\prime}$ of $A_j$. Now $|a_{i+1}\cdots a_{j-1}|=(j-1)-(i+1)+1=j-i-1$, and  $|a_{j+1}\cdots a_{k-1}|=(k-1)-(j+1)+1=k-j-1$. However, either 
$j-i-1\ge 3$, or $k-j-1\ge 3$. If $j-i-1\ge 3$, then
$A_j^{\prime\prime}=A_i^{\prime\prime}A_{i+1}\cdots A_{j-1}A_j^{\prime\prime\prime}$ for some non-empty prefix $A_j^{\prime\prime\prime}$ of $A_j$ which contradicts 
 Lemma~\ref{suffix}, letting $s=A_i^{\prime\prime}$, $\alpha = a_i$,
$\chi= a_{i+1}\cdots a_{j-4}$, $xyz=a_{j-3}a_{j-2}a_{j-1}$, $p=A_j^{\prime\prime\prime}$, $\beta=a_j$.

In the case where $k-j-1\ge 3$, we get the analogous contradiction using
 Lemma~\ref{prefix}.
\end{proof}
\begin{lemma}\label{lineup} We have $A_j^{\prime}=A_k^{\prime}$,  $A_i^{\prime\prime}=A_j^{\prime\prime}$,
 $j-i=k-j$, and $A_{i+\ell}=A_{j+\ell}$, $1\le \ell\le j-i-1$.
\end{lemma}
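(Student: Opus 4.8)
The plan is to reduce everything to the single equation $A_i''=A_j''$ and then read off the rest from the two expressions \eqref{x=x} for $x$. Granting $A_i''=A_j''$ for the moment, cancel this common prefix from \eqref{x=x} to get $A_{i+1}\cdots A_{j-1}A_j'=A_{j+1}\cdots A_{k-1}A_k'$. Comparing left ends, $f(a_{i+1})$ and $f(a_{j+1})$ are each a prefix of the other, so Lemma~\ref{no prefix} gives $a_{i+1}=a_{j+1}$, i.e.\ $A_{i+1}=A_{j+1}$; cancelling and iterating yields $A_{i+\ell}=A_{j+\ell}$ until one side is used up. If both sides run out simultaneously, then $j-i=k-j$, $A_{i+\ell}=A_{j+\ell}$ for $1\le\ell\le j-i-1$, and the leftover pieces give $A_j'=A_k'$ --- exactly the claim. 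If one side runs out strictly first, say the left, then a complete block $f(a_m)$ with $m=2j-i\le k-1$ is forced to be a prefix of the leftover $A_j'$, hence a prefix of $A_j=f(a_j)$; Lemma~\ref{no prefix} then gives $a_m=a_j$, so $|f(a_m)|=|A_j|$, contradicting $|f(a_m)|\le|A_j'|<|A_j|$. The mirror situation (using $A_k$ in place of $A_j$) and the degenerate index cases $j=i+1$, $k=j+1$ are excluded the same way. So it all comes down to proving $A_i''=A_j''$.

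For that I would argue by contradiction. Since $A_i''$ and $A_j''$ are both prefixes of $x$, one is a proper prefix of the other; say $|A_i''|>|A_j''|$ (the case $|A_i''|<|A_j''|$ being symmetric, using the first equality in \eqref{x=x} with the roles of $A_i$ and $A_j$ interchanged). Then $A_i''=A_j''S$ with $S$ non-empty, and the bound $|A_i''|+|A_k'|\le|x|$ from the previous lemma forces $S$ to be a prefix of $A_{j+1}\cdots A_{k-1}$, so write $S=A_{j+1}\cdots A_{m-1}P$ with $j+1\le m\le k$ and $P$ a proper prefix of $A_m$. The intended contradictions are: (i) if $P\ne\epsilon$ and at least three complete blocks are swallowed ($m\ge j+4$), then $A_i''=A_j''\,f(a_{j+1}\cdots a_{m-1})\,P$ is precisely the configuration forbidden by Lemma~\ref{suffix} (with $t=a_i$, $\alpha=a_j$, $\beta=a_m$, and $a_j\cdots a_m$ a factor of $w$); (ii) if $P=\epsilon$, then $f(a_{m-1})$ is a suffix of $A_i''$, hence of $f(a_i)$, so $a_{m-1}=a_i$ by Lemma~\ref{no prefix}, and since $A_i''=A_j''\,f(a_{j+1}\cdots a_{m-1})$ is a suffix of $f(a_i)=f(a_{m-1})$ the non-erasing hypothesis pins $m=j+2$, after which $A_i''=A_j''f(a_i)$ is longer than $f(a_i)$ yet a suffix of it --- impossible. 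The remaining cases --- $P\ne\epsilon$ with only one or two complete blocks swallowed ($m\in\{j+1,j+2,j+3\}$), and the corresponding low-block cases of the symmetric branch --- I would dispose of by comparing the two sides of \eqref{x=x} directly, invoking Lemma~\ref{at most 3}, the non-erasing property, square-freeness of $w$, and, where nothing else bites, the minimality of $|x|$.

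I expect the genuine obstacle to be exactly those low-block cases: when only one or two complete blocks separate $A_j''$ from $A_i''$, Lemma~\ref{suffix} has no traction, and one must extract a contradiction from short configurations by hand --- typically by producing an image $f(y)$ of a factor $y$ of $w$ that is too long to fit inside $f$ of a single letter, or a square inside $w$ itself, or (via minimality) a square in $f(w)$ shorter than $xx$. I would also note that reversing $w$, which preserves square-freeness and preserves avoidance of $S_1$ and of $S_2$, lets one assume $j-i\le k-j$ without loss of generality; but this symmetry does not merge the two sub-cases $|A_i''|\gtrless|A_j''|$, so both genuinely have to be treated.
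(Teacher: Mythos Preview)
Your overall architecture --- prove one of the two boundary equalities first, then peel blocks one at a time using Lemma~\ref{no prefix} --- matches the paper's, only mirrored (you aim for $A_i''=A_j''$ and cancel from the left; the paper proves $A_j'=A_k'$ and cancels from the right). The peeling/induction half of your argument is fine and essentially the same as the paper's.

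The genuine gap is exactly where you flag it: the low-block cases $m\in\{j+1,j+2,j+3\}$ with $P\ne\epsilon$. You do not give an argument there, and the tools you list (Lemma~\ref{at most 3}, non-erasing, minimality of $|x|$) do not obviously dispose of them; for instance, when $m=j+1$ one only learns $a_i=a_{j+1}$ from the overlap $P$, and there is no evident short factor of $w$ whose image contains a square, nor any shorter square in $f(w)$. The paper's proof sidesteps this difficulty completely, and the missing idea is worth noting. Having written (in your notation) $A_i''=A_j''A_{j+1}\cdots A_{m-1}P$ with $P$ a non-empty proper prefix of $A_m$, observe first that the overlap $P$ forces $a_m=a_i$. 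Now do \emph{not} try to invoke Lemma~\ref{suffix}; instead feed $A_i''$ back into \eqref{x=x} to obtain
\[
P'\,A_{m+1}\cdots A_{k-1}A_k'\;=\;A_{i+1}\cdots A_{j-1}A_j',
\]
where $P'$ is the (non-empty) suffix of $A_m$ with $A_m=PP'$. Thus $P'$ is simultaneously a suffix of $A_m$ and a prefix of $A_{i+1}A_{i+2}\cdots$, so some non-empty word is both a suffix of $A_m$ and a prefix of some $A_n$, giving $a_n=a_m=a_i$. Let $q\ge i+1$ be the \emph{least} index with $a_q=a_i$; then $a_{i+1}\cdots a_{q-1}$ is square-free over two letters, hence of length at most $3$, and since $|P'|<|A_m|=|A_q|$ one has $P'$ a prefix of $A_{i+1}\cdots A_q$. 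Consequently $P'P'$ occurs in $A_iA_{i+1}\cdots A_q=f(a_i\cdots a_q)$, a factor of $w$ of length at most $5$ --- contradiction. This argument is uniform in $m$ and eliminates the low-block case analysis altogether; it is precisely (up to the left/right mirror) what the paper does.
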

\begin{proof}
To begin with we show that $A_j^{\prime}=A_k^{\prime}$. Since both words are suffixes of $x$, it suffices to show that 
$|A_j^{\prime}|=|A_k^{\prime}|$. 
Suppose for the sake of getting a contradiction that $|A_k^{\prime}|<|A_j^{\prime}|$. 

By the previous lemma, $|A_j^{\prime}|\le|A_{j+1}\cdots A_{k-2}A_{k-1}A_k'|$. Let $m$ be greatest such that $|A_j^{\prime}|\le|A_m\cdots A_{k-1}A_k'|.$ Thus $j+1\le m\le k-1$, and
$$|A_{m+1}\cdots A_{k-1}A_k'|<|A_j^{\prime}|\le |A_mA_{m+1}\cdots A_{k-1}A_k'|.$$
	It follows that  $A_j^{\prime}=A_m^{\prime\prime}A_{m+1}\cdots A_{k-1}A_k'$, for some non-empty suffix $A_m^{\prime\prime}$ of  
$A_m$. Therefore, $f(a_ma_j)$ contains the square factor $A_m^{\prime\prime}A_m^{\prime\prime}$, forcing $a_m=a_j$. If
$A_m^{\prime\prime}=A_m$, then $A_m$ is a prefix of $A_j$, and Lemma~\ref{no prefix} forces  $A_m=A_j^{\prime}=A_j$. This is contrary to our choice of $A_j^{\prime}$. We may therefore  write $A_m=A_m'A_m^{\prime\prime}$ where $A_m'$ is non-empty.

Now $A_m'$ is a suffix of $A_i^{\prime\prime}A_{i+1}\cdots A_{j-1}$. Let $n$ be greatest such that $A_m'$ is a suffix of $A_nA_{n+1}\cdots A_{j-1}$; thus $i\le n\le j-1$. 
It follows that a non-empty prefix $A_m^{\prime\prime\prime}$ of $A_m'$ is a suffix of $A_n$; this implies $f(a_na_m)$ contains a square, whence $a_n=a_m$.
Let $q\le j-1$ be greatest such that $a_q=a_m$; thus $i\le n\le q\le j-1$. Each of $A_m'$ and $A_q\cdots A_{j-1}$ is a suffix of $A_nA_{n+1}\cdots A_{j-1}$. Since $|A_m'|<|A_m|=|A_q|$, we conclude that $A_m'$ is a suffix of $A_q\cdots A_{j-1}$. By the choice of $q$, $a_{q+1}\cdots a_{j-1}$ does not have letter $a_m$ as a factor; it is thus a square-free word over a 2-letter alphabet, whence $|a_{q+1}\cdots a_{j-1}|\le 3$. However, $A_m'$ is a suffix of $A_q\cdots A_{j-1}$, so that the square
$A_m'A_m'$ is a factor of $A_q\cdots A_{j-1}A_m=A_q\cdots A_{j-1}A_j.$ This is impossible, since $a_q\cdots a_{j-1}a_j$ is a factor of $w$ of length at most 4.

The assumption $|A_k'|<|A_j'|$ leads to a contradiction. 
The assumption $|A_j'|<|A_k'|$ leads to a similar contradiction, and we conclude that $A_j'=A_k'$, as desired.

Next, we show that $j-i=k-j$ and $A_{i+\ell}=A_{j+\ell}$, $1\le \ell\le j-i-1$. Suppose that $j-i\le k-j$. (The other case is similar.) 
Suppose now that we have shown that for some $\ell$, $0\le \ell< j-i-1$  that
\begin{equation}\label{IHOP}A_{j-\ell}\cdots A_{j-1}A_j'=A_{k-\ell}\cdots A_{k-1}A_k'.\end{equation}
This is true when $\ell=0$; i.e., $A_j^{\prime}=A_k^{\prime}$.

From (\ref{x=x}), one of $A_{j-\ell-1}A_{j-\ell}\cdots A_{j-1}A_j'$ and $A_{k-\ell-1}A_{k-\ell}\cdots A_{k-1}A_k'$  
 is a suffix of the other. Together with (\ref{IHOP}), this implies that
 one of $A_{j-\ell-1}$ and
$A_{k-\ell-1}$ 
 is a suffix of the other.
By Lemma~\ref{no prefix}, this implies that $a_{j-\ell-1}=a_{k-\ell-1}$, and by combining this with (\ref{IHOP}),
\begin{equation}A_{j-\ell-1}\cdots A_{j-1}A_j'=A_{k-\ell-1}\cdots A_{k-1}A_k'.\end{equation}
By induction we conclude that
$$A_{j-\ell}\cdots A_{j-1}A_j'=A_{k-\ell}\cdots A_{k-1}A_k',0\le\ell\le j-i-1,$$

which implies $A_{j-\ell}=A_{k-\ell}$, $1\le \ell\le j-i-1$.
In particular, we note that
\begin{equation}\label{var x=x}A_{i+1}\cdots A_{j-1}=A_{k-j+i+1}\cdots A_{k-1}.\end{equation}

If we now have $k-j>j-i$, then $k-j+i>j$, and (\ref{x=x}) and (\ref{var x=x}) imply that
$A_i^{\prime\prime}=A_j^{\prime\prime}A_{j+1}\cdots A_{k-j+i}$. Then $A_{k-j+i}$ is a suffix of $A_i^{\prime\prime}$ and Lemma~\ref{no prefix} forces $A_{k-j+i}=A_i$. Then (\ref{x=x}) and (\ref{var x=x})
force $A_i=A_i^{\prime}$, contrary to our choice of $A_i^{\prime}$.
We conclude that $k-j=j-i$. From (\ref{x=x}) and (\ref{var x=x}) we conclude that $A_i^{\prime\prime}=A_j^{\prime\prime}$, as desired.
\end{proof}
\begin{proof}[Proof of Theorem \ref{abc}] By Lemma~\ref{lineup}, $w$ contains a factor
$\alpha z \beta z \gamma$, where $\alpha=a_i$, $\beta=a_j$, $\gamma=a_k$, $z=a_{i+1}\cdots a_{j-1}=a_{j+1}\cdots a_{k-1}$. This gives $|z|=j-i-1=k-j-1\ge 3$.

Since $w$ is square-free, we cannot have $a_i=a_j$; otherwise $w$ contains the square $(a_iz)^2$; similarly, $a_j\ne a_k$. To see that $\alpha\beta\gamma$ is not a factor of $w$, we note that $f$ is square-free on factors of $w$ of length at most 7, but $f(\alpha\beta\gamma)=A_iA_jA_k$ contains the square $(A_i^{\prime\prime} A_j')(A_j^{\prime\prime} A_k')$; this is a square since $A_i^{\prime\prime}=A_j^{\prime\prime}$ and $A_j'=A_k'$. Since $|A_iA_jA_k|=3\le 7$, we conclude that $a_ia_ja_k$ is not a factor of $w$.
\end{proof}
\begin{lemma}\label{aba,cbc} If $S=\{aba,cbc\}$, the only length 3 squarefree words over $\Sigma$  which are not factors of $w$ are $aba$ and $cbc$. In addition, $w$ contains no factor of the form $azbza$ or $czbzc$.
\end{lemma}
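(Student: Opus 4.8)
The plan is to analyze which length-3 square-free words over $\Sigma$ fail to be factors of $w$, using the hypothesis $S=\{aba,cbc\}$ together with Theorem~\ref{abc} and the known combinatorial structure of square-free ternary words. First I would recall that, by a classical result of Thue (cited in the introduction), for an infinite square-free word over $\Sigma$, the only length-3 square-free words that can possibly be excluded are those of the form $xyx$; moreover the set of excluded words is quite constrained — for an infinite square-free word avoiding one of the sets $S_i$, the excluded length-3 words are \emph{exactly} the two words of $S_i$. Since $w$ avoids $S=\{aba,cbc\}$ by hypothesis, $aba$ and $cbc$ are not factors of $w$; the content of the first sentence of the lemma is the converse direction, that \emph{no other} length-3 square-free word is missing. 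The square-free words of the form $xyx$ over $\Sigma$ are $aba,aca,bab,bcb,cac,cbc$; I must show each of the remaining four, namely $aca,bab,bcb,cac$, does occur in $w$. The key observation is that $w$ is infinite and square-free, so it uses all three letters infinitely often and contains all six length-2 square-free words (Thue); a short case analysis of how the three letters can be arranged — forbidding $aba$ and $cbc$ but being forced to contain, say, $ab$ and $ba$ and enough surrounding context — forces the appearance of each of $aca$, $bab$, $bcb$, $cac$. I would carry this out by taking an occurrence of each length-2 word and examining the letter that must follow or precede it, ruling out the forbidden continuations.

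For the second assertion, suppose toward a contradiction that $w$ contains a factor $azbza$ with $a\ne b$ (the $czbzc$ case being symmetric under the $a\leftrightarrow c$ exchange that fixes $S$). Here I would want to invoke Theorem~\ref{abc}: its conclusion produces a factor $\alpha z'\beta z'\gamma$ with $\alpha,\gamma\ne\beta$, $|z'|\ge3$, and $\alpha\beta\gamma$ \emph{not} a factor of $w$. But a factor of the shape $azbza$ has its two outer letters equal, $\alpha=\gamma=a$, so the associated length-3 word $\alpha\beta\gamma=aba$, which by the first part of the lemma is indeed a non-factor of $w$ — so Theorem~\ref{abc} does not immediately give a contradiction. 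Instead the argument must go the other way: I would assume $w$ contains $azbza$ and derive a contradiction directly from square-freeness of $w$ together with the specific structure of $z$. The point is that $z$ itself, being a factor of a square-free ternary word, is square-free, and the two outer occurrences of $a$ mean $az$ occurs with both a preceding/following copy; combined with the constraint that $aba$ is not a factor of $w$ (so $z$ is nonempty and cannot be too short) and that $w$ avoids $cbc$, one pins down the first and last letters of $z$ and reaches a forbidden square or a forbidden length-3 factor.

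More concretely, in $azbza$ the letter immediately after the first $a$ is the first letter of $z$; it is not $a$ (square), and if it were $b$ then $ab$ is followed by more of $z$, eventually reaching $b$, and one checks whether this creates $aba$ or a square. The cleanest route is: the last letter of $z$ followed by $a$ and then (beginning the second $z$) the first letter of $z$ must form a square-free length-3 word, as must the last letter of $z$ preceded appropriately; running these local constraints forces $z$ to begin and end with letters from $\{b,c\}$, and then the block $b z b$ sitting inside (from $\dots z b z \dots$) together with the avoidance of $cbc$ and $aba$ collapses the possibilities, yielding that $w$ would have to contain one of the forbidden length-3 words or an honest square. I expect the main obstacle to be organizing this local case analysis cleanly — there are several positions (first and last letters of $z$, the letters flanking the whole factor) and one must exploit the symmetry $a\leftrightarrow c$ (which preserves $S=\{aba,cbc\}$ and hence the hypothesis on $w$) to halve the work, while being careful that $b$ is the fixed point of that symmetry so the $azbza$ and $czbzc$ cases are genuinely interchanged rather than each self-symmetric.
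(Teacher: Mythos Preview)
Your plan matches the paper's approach in outline: cite Thue for the first assertion, and for the second argue directly from the structure of $azbza$ by determining the first and last letters of $z$. (The paper simply invokes Thue's classification for the first sentence; your proposed case-by-case verification that $aca,bab,bcb,cac$ each occur is not needed.) But your local analysis for the second assertion is muddled and stops short of the decisive observation. In the factor $azbza$, both $az$ and $bz$ occur as factors of $w$ --- the former at the left end, the latter from the central $b$ followed by the second copy of $z$ --- so square-freeness forces the first letter of $z$ to differ from both $a$ and $b$: it is exactly $c$, not merely ``in $\{b,c\}$'' as you write. Symmetrically, both $zb$ and $za$ are factors, so the last letter of $z$ is also $c$. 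Hence $zbz$ (not $bzb$) contains $cbc$, which is the contradiction. Your description of the adjacencies (``the last letter of $z$ followed by $a$ and then \dots\ beginning the second $z$'') does not correspond to any position in $azbza$, and the detour through Theorem~\ref{abc} is a red herring you rightly abandon. Once you note $z\ne\epsilon$ (immediate since $aba$ is forbidden), the three-line argument above is all that is needed; the $czbzc$ case then follows by the $a\leftrightarrow c$ symmetry you correctly identify.
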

\begin{proof} Thue \cite{thue2} showed that a squarefree word over $\Sigma$ not containing $aba$ or $cbc$ as a factor contains every other length 3 square-free word as a factor. 

Suppose $w$ contains a factor $azbza$. Since $aba$ is not a factor of $w$, $z\ne \epsilon$. Since $az$ and $bz$ are factors of $w$, and hence square-free, the first letter of $z$ cannot be $a$ or $b$, and must be $c$. Similarly, the last letter of $z$ must be $c$. But $w$ contains $zbz$, and thus $cbc$. This is a contradiction. Therefore $w$ contains no factor $azbza$.

Replacing $a$ by $c$ and vice versa in the preceding argument shows that $w$ contains no factor $czbzc$.
\end{proof}

Combining Theorem 2 and Lemma~ref{aba,cbc}  gives this corollary:

\begin{corollary} If $S=\{aba,cbc\}$ and $f:\Sigma^*\rightarrow T^*$ is square-free on factors of $w$ of length at most 7, then $f(w)$ is squarefree.
\end{corollary}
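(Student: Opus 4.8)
The plan is to argue by contradiction, simply feeding the output of Theorem~\ref{abc} into Lemma~\ref{aba,cbc}. Suppose $f(w)$ is not square-free. Under the standing assumption that $f$ is non-erasing, $f(w)$ then contains a non-empty square, so Theorem~\ref{abc} produces a factor $\alpha z\beta z\gamma$ of $w$ with $\alpha,\beta,\gamma\in\Sigma$, $\alpha\ne\beta$, $\gamma\ne\beta$, $|z|\ge 3$, and with $\alpha\beta\gamma$ \emph{not} a factor of $w$.

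The key observation is that $\alpha\beta\gamma$ is a square-free word of length $3$: no two adjacent letters coincide, since $\alpha\ne\beta$ and $\beta\ne\gamma$, and any length-$3$ word over $\Sigma$ with distinct adjacent letters is square-free. Thus $\alpha\beta\gamma$ is one of the length-$3$ square-free words over $\Sigma$ that does not occur in $w$. By the first assertion of Lemma~\ref{aba,cbc} (applicable since $S=\{aba,cbc\}$), the only such words are $aba$ and $cbc$, so $\alpha\beta\gamma\in\{aba,cbc\}$. In either case $\gamma=\alpha$, so $w$ contains the factor $\alpha z\beta z\alpha$, which has the form $azbza$ when $\alpha\beta\gamma=aba$ and the form $czbzc$ when $\alpha\beta\gamma=cbc$.

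But the second assertion of Lemma~\ref{aba,cbc} states precisely that $w$ contains no factor of the form $azbza$ or $czbzc$. This contradiction shows that $f(w)$ must be square-free, which is the claim.

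I do not anticipate a genuine obstacle here: all the substance has already been absorbed into Theorem~\ref{abc} and Lemma~\ref{aba,cbc}, and the corollary is just their conjunction. The only point deserving a word of justification is that $\alpha\beta\gamma$ is itself square-free, so that the dichotomy of Lemma~\ref{aba,cbc} can legitimately be invoked, and this is immediate from $\alpha,\gamma\ne\beta$.
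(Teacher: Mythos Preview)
Your proof is correct and follows exactly the approach the paper intends: the paper simply states that the corollary follows by ``combining Theorem~\ref{abc} and Lemma~\ref{aba,cbc},'' and you have faithfully spelled out that combination, including the small observation that $\alpha\ne\beta\ne\gamma$ makes $\alpha\beta\gamma$ square-free so that Lemma~\ref{aba,cbc} applies.
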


\begin{lemma}\label{aba,aca} If $S=\{aba,aca\}$, the only length three squarefree words over $\Sigma$ which are not factors of $w$ are $aba$ and $aca$. In addition, $w$ contains no factor of the form $azbza$ or $azcza$, $|z|\ge 3$.
\end{lemma}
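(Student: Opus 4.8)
The plan is to follow the pattern of Lemma~\ref{aba,cbc}. The first assertion is immediate from Thue's structural result \cite{thue2}: every square-free word over $\Sigma$ that contains neither $aba$ nor $aca$ as a factor contains each of the remaining ten length-$3$ square-free words over $\Sigma$ as a factor. Since $w$ is square-free and has no factor in $S=\{aba,aca\}$, the length-$3$ square-free words over $\Sigma$ absent from $w$ are exactly $aba$ and $aca$.

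For the second assertion I would argue by contradiction, and first pin down the extreme letters of a putative factor $azbza$ with $|z|\ge 3$. The word $az$ is a factor of $w$, hence square-free, so the first letter of $z$ is not $a$; the word $bz$ is also a factor, since the occurrence of $b$ in $azbza$ immediately precedes the second copy of $z$, so the first letter of $z$ is not $b$; hence the first letter of $z$ is $c$. Symmetrically, using the factors $zb$ and $za$, the last letter of $z$ is neither $b$ nor $a$, and so is $c$. Because $|z|\ge 3$, the last two letters of $z$ are $pc$ for a well-defined letter $p$, and $p\ne c$, since otherwise $z$ --- a factor of the square-free word $w$ --- would contain the square $cc$.

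It then remains to rule out $p=a$ and $p=b$. If $p=a$, then $z$ ends in $ac$, so the factor $za$ of $w$ ends in $aca\in S$, which is impossible. If $p=b$, then $z$ ends in $bc$, so the factor $zbz$ of $w$ contains $bcbc=(bc)^2$, contradicting square-freeness. As $p$ cannot be any letter of $\Sigma$, no factor $azbza$ with $|z|\ge 3$ exists. The case of a factor $azcza$ with $|z|\ge 3$ follows from the identical argument with $b$ and $c$ interchanged --- equivalently, by applying to $w$ the letter-swap $b\leftrightarrow c$, which fixes $S$: here $z$ begins and ends with $b$, its penultimate letter $p$ satisfies $p\ne b$, while $p=a$ would force the factor $aba\in S$ and $p=c$ would force the square $cbcb$ inside $zcz$.

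I expect no real difficulty: the content is just the routine bookkeeping of verifying that $az$, $bz$, $zb$, $za$, and the length-$4$ window $pcbc$ inside $zbz$ genuinely occur as factors of $w$, together with the observation that $|z|\ge 3$ is precisely what furnishes a penultimate letter of $z$ to analyze while simultaneously excluding the degenerate $z=cc$. The only external ingredient is Thue's completeness statement for words avoiding $aba$ and $aca$, needed for the first assertion (and usable, if one wished, to shorten the $p=a$ step of the second).
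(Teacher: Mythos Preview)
Your proof is correct and follows essentially the same approach as the paper: cite Thue for the first assertion, then for $azbza$ determine that $z$ begins and ends with $c$ and derive a contradiction from a short window using the forbidden factor $aca$ and square-freeness. The only cosmetic difference is that the paper analyzes the \emph{second} letter of $z$ (forcing $z=cbz'c$ and finding the square $cbcb$ at the junction $z'c\cdot b\cdot cb$), whereas you analyze the \emph{penultimate} letter; these are mirror-image versions of the same local argument.
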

\begin{proof} Thue \cite{thue2} showed that a squarefree word over $\Sigma$ not containing $aba$ or $aca$ as a factor contains every other length 3 square-free word as a factor. 

Suppose $w$ contains a factor $azbza$, $|z|\ge 3$.  Since $az$ and $bz$ are factors of $w$, and hence square-free, the first letter of $z$ cannot be $a$ or $b$, and must be $c$. Similarly, the last letter of $z$ must  be $c$. However, since $az$ is a factor of $w$, but $aca$ is not, the second letter of $z$ cannot be $a$ and must be $b$. Write $z=cbz'c$. Then $azbza=acbz'cbcbz'ca$ contains the square $cbcb$, which is impossible. We conclude that $w$ contains no factor $azbza$, $|z|\ge 3$.

Replacing $c$ by $b$ and vice versa in the preceding argument shows that $w$ contains no factor $azcza$, $|z|\ge 3$.
\end{proof}

\begin{corollary} If $S=\{aba,aca\}$ and $f:\Sigma^*\rightarrow T^*$ is square-free on factors of $w$ of length at most 7, then $f(w)$ is square-free.
\end{corollary}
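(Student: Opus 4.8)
The plan is to prove this corollary by contradiction, in exactly the same way the $S_1$ corollary was obtained, namely by combining Theorem~\ref{abc} with Lemma~\ref{aba,aca}. So suppose $f:\Sigma^*\to T^*$ is square-free on factors of $w$ of length at most $7$, but that, contrary to what we want, $f(w)$ contains a non-empty square. Then Theorem~\ref{abc} applies and yields a factor $\alpha z\beta z\gamma$ of $w$ with $\alpha,\beta,\gamma\in\Sigma$, $\alpha\ne\beta$, $\gamma\ne\beta$, $|z|\ge 3$, and with $\alpha\beta\gamma$ \emph{not} a factor of $w$.

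Next I would note that $\alpha\beta\gamma$ is a square-free word of length $3$ over $\Sigma$, since a length-$3$ word $xyz$ is square-free precisely when $x\ne y$ and $y\ne z$, and here $\alpha\ne\beta$, $\beta\ne\gamma$. Since $\alpha\beta\gamma$ is square-free of length $3$ but is not a factor of $w$, the first assertion of Lemma~\ref{aba,aca} forces $\alpha\beta\gamma\in\{aba,aca\}$. In either case $\alpha=\gamma=a$ and $\beta\in\{b,c\}$. I then split into the two cases: if $\beta=b$, then $w$ contains the factor $\alpha z\beta z\gamma=azbza$ with $|z|\ge 3$, contradicting the second assertion of Lemma~\ref{aba,aca}; if $\beta=c$, then $w$ contains $azcza$ with $|z|\ge 3$, again contradicting Lemma~\ref{aba,aca}. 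Either way we have a contradiction, so $f(w)$ is square-free.

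I do not expect any genuine obstacle, since all the real work has already been carried out in Theorem~\ref{abc} and Lemma~\ref{aba,aca}; the proof is just a short combination of the two. The one thing worth checking is that the hypothesis $|z|\ge 3$ supplied by Theorem~\ref{abc} matches the hypothesis $|z|\ge 3$ under which Lemma~\ref{aba,aca} rules out the factors $azbza$ and $azcza$ — note that, unlike the $S_1$ analogue (Lemma~\ref{aba,cbc}), here the exclusion is only asserted for $|z|\ge 3$. This matching of the length bounds goes through, so the argument is complete.
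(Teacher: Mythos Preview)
Your proof is correct and follows exactly the paper's approach: the corollary is obtained by combining Theorem~\ref{abc} with Lemma~\ref{aba,aca}, just as the first corollary combined Theorem~\ref{abc} with Lemma~\ref{aba,cbc}. Your observation that the bound $|z|\ge 3$ from Theorem~\ref{abc} matches the hypothesis in Lemma~\ref{aba,aca} is the one point worth making explicit, and you handle it correctly.
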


\noindent Combining Corollaries 1 and 2 establishes our main result.\vspace{.1in}

\noindent {\bf Theorem 1:}
{\it Suppose that $f:\Sigma\rightarrow T^*$ is a non-erasing morphism. Word $f(w)$ is square-free if and only if  $f$ is square-free on factors of $w$ of length 7 or less.}

\begin{remark} The square-free word $azbza$ where $z=cabcbac$ has no factors $aba$ or $bab$. It follows that any analogous theorem for $S_3$, with an analogous proof, would require us to replace 7 by a value of at least $|azbza|=17.$
\end{remark}

\end{document}